\theoremstyle{definition}
\newtheorem{lemma}{\bf Lemma}
\newtheorem{corollary}{\bf Corollary}
\theoremstyle{remark}
\newtheorem{remark}{\bf Remark}
\acrodef{ofdm}[OFDM]{orthogonal frequency division multiplexing}%
\acrodef{miso-ofdm}[MISO-OFDM]{multi-input single-output orthogonal frequency division multiplexing}%
\acrodef{ris}[RIS]{reconfigurable intelligent surface}%
\acrodef{qos}[QoS]{quality of service}%
\acrodef{idft}[IDFT]{inverse discrete Fourier transform}%
\acrodef{dft}[DFT]{discrete Fourier transform}%
\acrodef{cp}[CP]{cyclic prefix}%
\acrodef{csi}[CSI]{channel state information}%
\acrodef{awgn}[AWGN]{additive white Gaussion noise}%
\acrodef{qcqp}[QCQP]{quadratically constrained quadratic program}%
\acrodef{qp}[QP]{quadratic program}%
\acrodef{bs}[BS]{base station}%
\acrodef{ap}[BS]{base station}%
\acrodef{aps}[APs]{access points}%
\acrodef{qos}[QoS]{quality of service}%
\acrodef{ue}[UE]{user equipment}%
\acrodef{snr}[SNR]{signal-to-noise ratio}%
\acrodef{mmwave}[mmWave]{millimeter-wave}%
\acrodef{snr}[SNR]{signal-to-noise ratio}%
\acrodef{sinr}[SINR]{signal-to-interference-plus-noise ratio}%
\acrodef{ser}[SER]{symbol error rate}%
\acrodef{rc}[RC]{reflection coefficient}%
\acrodef{uavs}[UAVs]{unmanned aerial vehicles}%
\acrodef{mimo}[MIMO]{multiple-input multiple-output}%
\acrodef{noma}[NOMA]{non-orthogonal multiple access}%
\acrodef{ace}[ACE]{adaptive cross-entropy}%
\acrodef{wsr}[WSR]{weighted sum-rate}%
\acrodef{udn}[UDN]{ultra-dense network}%
\acrodef{Udn}[UDN]{Ultra-dense network}%
\def\BibTeX{{\rm B\kern-.05em{\sc i\kern-.025em b}\kern-.08em
    T\kern-.1667em\lower.7ex\hbox{E}\kern-.125emX}}
\begin{document}
\title{Location Division Multiple Access for\\ Near-Field Communications}
\author{
    \vspace{0.2cm}
    \IEEEauthorblockN{
        Zidong~Wu
        and
        Linglong~Dai\\}
    \IEEEauthorblockA{\IEEEauthorrefmark{0}
        Department of Electronic Engineering, Tsinghua University,\\
        Beijing National Research Center for Information Science and Technology (BNRist), Beijing 100084, China\\
        Email: wuzd19@mails.tsinghua.edu.cn, daill@tsinghua.edu.cn
        \vspace{-1em}
    }
}
\maketitle
\begin{abstract}
Spatial division multiple access (SDMA) is essential to improve the spectrum efficiency for multi-user multiple-input multiple-output (MIMO) communications. The classical SDMA for massive MIMO with hybrid precoding heavily relies on the angular orthogonality in the far field to distinguish multiple users at different angles, which fails to fully exploit spatial resources in the distance domain. With dramatically increasing number of antennas, extremely large-scale antenna array (ELAA) introduces additional resolution in the distance domain in the near field. In this paper, we propose the concept of location division multiple access (LDMA) to provide a new possibility to enhance spectrum efficiency. The key idea is to exploit extra spatial resources in the distance domain to serve different users at different locations (determined by angles and distances) in the near field. Specifically, the asymptotic orthogonality of beam focusing vectors in the distance domain is proved, which reveals that near-field beam focusing is able to focus signals on specific locations to mitigate inter-user interferences. Simulation results verify the superiority of the proposed LDMA over classical SDMA in different scenarios.
\end{abstract}

\begin{IEEEkeywords}
Spatial division multiple access (SDMA), massive MIMO, Extremely large-scale antenna array (ELAA), near-field, location division multiple access (LDMA).
\end{IEEEkeywords}
\vspace{-0.2cm}

\section{Introduction}
\par Massive multiple-input multiple-output (MIMO), which employs dozens or hundreds of antennas at the base station (BS), has become the key enabler to increasing spectrum efficiency by orders of magnitude in the fifth-generation (5G) networks. Spatial division multiple access (SDMA) is essential to simultaneously serve multiple user equipments (UEs) to achieve spectrum efficiency enhancement in massive MIMO systems~\cite{Marzetta'10}. Moreover, to meet the requirement of the 10-fold increase in spectrum efficiency for 6G, massive MIMO is evolving into the extremely large-scale antenna array (ELAA) equipped with thousands of antennas~\cite{Jackb'20'j}. 


\par For the widely adopted hybrid precoding architecture in massive MIMO and ELAA systems, SDMA is usually realized by the joint design of analog and digital precoding. Since the analog precoding is realized by phase shifters, the constant modulus constraints of phase shifters impose difficulties on designs of analog precoding, which is the main challenge for SDMA in hybrid precoding schemes~\cite{Ayach'14'j}.



\par Recently, owing to the characteristics of the channel, much research has focused on employing beam steering vectors to construct the analog precoder~\cite{Ayach'14'j}. By exploiting the sparsity of massive MIMO channels at high frequencies, beam steering vectors corresponding to directional beams can be directly utilized to focus the signal energy in desired directions to serve UEs. Meanwhile, owing to the angular asymptotic orthogonality of directional beams as the number of antennas tends to infinity, the received signal power could be maximized while inter-user interferences could be naturally eliminated~\cite{Xiao'15'j}. Following this insight, a practical two-stage multi-user precoding method was proposed in~\cite{Heath'15'j}. In the first stage, the analog precoder is selected from a predefined beam-steering codebook, such as discrete Fourier transform (DFT) codebooks, to maximize the received signal power and partially alleviate inter-user interferences. Then, the digital precoder is designed to further eliminate the remained interferences. 

\par Following such methods, to further enhance the spectrum efficiency, communication systems commonly rely on the high-cost way of increasing antennas to generate thinner beams, where analog precoding could eliminate interferences more thoroughly. Besides enlarging arrays only, this paper investigates a new possibility to boost the spectrum efficiency.

\par Inspired by near-field communications recently investigated in~\cite{cui'22'm}, we find that the extra resolution in the distance domain brought by near-field beams could be exploited to enhance spectrum efficiency. Specifically, the transition from massive MIMO to ELAA implies that the classical far-field \emph{planar-wave} propagation model is not accurate anymore because of the significantly increased array aperture~\cite{cui'22'm}. To precisely characterize the channel, near-field \emph{spherical-wave} propagation model has to be adopted, where the communications are referred to near-field communications. Owing to the different electromagnetic wave propagation models, unlike the far-field steering beams focusing signal energy on a certain angle, near-field beams are capable of focusing signal energy on a specific location~\cite{Heath'22'j}, which could be leveraged to mitigate interferences from UEs that can not be distinguished by far-field beams.

\par In this paper, the concept of location division multiple access (LDMA) is proposed, which aims to exploit extra spatial resources in the distance domain to enhance spectrum efficiency. Specifically, the LDMA communication scheme is proposed, which exploits the energy-focusing property of near-field beams to serve different UEs located at different angles and different distances, i.e. locations, to improve the system performance. Then, similar to the asymptotic orthogonality of far-field beams in the angular domain, the asymptotic orthogonality of near-field beams in the distance domain is investigated. Moreover, by virtue of the asymptotic orthogonality of near-field beams, the performance analysis is provided to reveal the asymptotic optimality of LDMA schemes. Simulation results are provided to verify the superiority of the LDMA scheme compared with classical SDMA schemes.


\section{System Model}\label{sec: sys}
\subsection{System Model}\label{sec: sys sys}
We consider a time division duplexing (TDD) narrow-band ELAA single-cell millimeter-wave (mmWave) communication scenario. The BS is equipped with an $N$-antenna uniform linear array (ULA) and $N_{\rm{RF}}$ RF chains, where the hybrid precoding architecture is employed and $N_{\rm{RF}} \leq N$ is satisfied. The BS aims to simultaneously serve $K$ single-antenna UEs, which requires $N_{\rm{RF}} \geq K$. For analysis simplicity, $N_{\rm{RF}} = K$ is assumed. The downlink system model is first introduced and the uplink channel can be similarly obtained, which is a transpose according to the reciprocity of TDD assumption~\cite{Xiao'15'j}.

In traditional massive MIMO mmWave systems, the received signal for all $K$ UEs can be represented as
\begin{equation}
\label{eq: downlink received signal}
\begin{aligned}
{\bf{y}}^{\rm{DL}} = {\bf{H}} {\bf{F}}_{\rm{A}} {\bf{F}}_{\rm{D}} {\bf{s}} + {\bf{n}},
\end{aligned}
\end{equation}
where ${\bf{y}}^{\rm{DL}} = [y_1, y_2, \cdots, y_K]^T$ denotes the $K \times 1$ received signals for all UEs, ${\bf{H}} = [{\bf{h}}_1, {\bf{h}}_2, \cdots, {\bf{h}}_K]^H$ denotes the downlink channel, ${\bf{h}}_k$ denotes the channel vector between BS and the $k^{th}$ UE. The signal vector ${\bf{s}}$ satisfying the power constraint $\mathbb{E}[{\bf{s}}{\bf{s}}^H] = {\bf{I}}$ is transmitted to all UEs. The precoding matrices contain two components, i.e., digital precoder ${\bf{F}}_{\rm{D}}$ and analog precoder ${\bf{F}}_{\rm{A}}$. Finally, ${\bf{n}} \sim \mathcal{CN}(0, \sigma_n^2{\bf{I}})$ denotes the Gaussian noise, where $\sigma_n^2$ denotes the variance of the noise.  
\par The system spectrum efficiency could be expressed as
\begin{equation}
\label{eq: spectrum efficiency}
\begin{aligned}
R =  \sum_k R_k = \sum_k \log_2\left(1+\frac{p_k|{\bf{h}}_k^H{\bf{F}}_{\rm{A}}{\bf{f}}_{{\rm{D}},k}|^2}{\sigma_n^2 + \sum_{l \neq k} p_l|{\bf{h}}_k^H{\bf{F}}_{\rm{A}}{\bf{f}}_{{\rm{D}},l}|^2}\right),
\end{aligned}
\end{equation}
where $p_k$ denotes the power allocated to the $k^{th}$ UE, ${\bf{f}}_{{\rm{D}},k}$ denotes the $k^{th}$ column of digital precoder ${\bf{F}}_{\rm{D}}$.

\subsection{Far-Field Channel Model}\label{sec: sys channel far}
The wireless channel can be constructed by either far-field~\cite{Ayach'14'j} or near-field model~\cite{Cui'22'tcom}, where the commonly adopted boundary is Rayleigh distance $r_{\rm_{RD}} = \frac{2D^2}{\lambda}$, where $D$ and $\lambda$ denote the array aperture and wavelength, respectively~\cite{Sherman'62'j}. In 5G massive MIMO communications where the array is not very large, Rayleigh distance is only several meters, meaning that UEs are usually located in the far field, where channels can be modeled by the planar-wave propagation model as
\begin{equation}
\label{eq: far-field channel}
\begin{aligned}
{\bf{h}}_k^{\rm{far}} = \sqrt{N} \alpha_{0} {\bf{a}}(\phi_0) + \sqrt{\frac{N}{L}} \sum_{l=1}^{L}\alpha_{l} {\bf{a}}(\phi_l),
\end{aligned}
\end{equation}
which contains one line-of-sight (LoS) path and $L$ non-line-of-sight (NLoS) paths. Parameters $\alpha_l$ and $\phi_l$ denote the complex path gain and azimuth angle of the $l^{th}$ path, respectively. The index $l=0$ represents the LoS path, while $l \geq 1$ represents the NLoS paths. The channel gains $\alpha_{l}$ obey $\alpha_{l} \sim \mathcal{CN}(0, \sigma_{\alpha,l}^2)$ for each path, where $\sigma_{\alpha,0}^2 = \frac{\kappa}{\kappa+1}$ for the LoS path and $\sigma_{\alpha,l}^2 = \frac{1}{\kappa+1}$ for NLoS paths, respectively. The Rician factor $\kappa$ denotes the power ratio of LoS and NLoS paths. Due to the planar-wave propagation model, the far-field steering vector ${\bf{a}}(\phi_l)$ for an $N$-element ULA can be expressed as 
\begin{equation}
\label{eq: ula}
\begin{aligned}
{\bf{a}}(\phi) = \frac{1}{\sqrt{N}} \left[1, e^{jkd \sin\phi}, \cdots, e^{j\frac{2\pi}{\lambda}(N-1)d \sin\phi}\right]^T,
\end{aligned}
\end{equation}
where $k = \frac{2\pi}{\lambda}$ denotes the wavenumber, and $d$ denotes the spacing of adjacent antenna elements. However, since the number of antenna elements significantly increases in ELAA systems, the Rayleigh distance significantly increases. For instance, the Rayleigh distance for a $1\,{\rm{m}}$-ULA at $30$ GHz reaches $200\,{\rm{m}}$, which covers a large proportion of a cell. Therefore, the electromagnetic propagation model has to be based on spherical waves in ELAA systems~\cite{cui'22'm}.
 
\subsection{Near-Field Channel Model}\label{sec: sys channel near}
\par Based on the spherical-wave propagation model, the near-field channel can be formulated as~\cite{Cui'22'tcom}
\begin{equation}
\label{eq: near-field channel}
\begin{aligned}
{\bf{h}}_k^{\rm{near}} = \sqrt{N} \alpha_{0} {\bf{b}}(r_0, \phi_0) + \sqrt{\frac{N}{L}} \sum_{l=1}^{L}\alpha_{l} {\bf{b}}(r_l, \phi_l),
\end{aligned}
\end{equation}
where ${\bf{b}}(r_l, \phi_l)$ denotes the near-field beam focusing vector, which focuses the signal energy at the location $(r_l, \phi_l)$. To emphasize the different properties of near-field beams, we term the single-path near-field channel as \emph{beam focusing vector}, which is opposite to the classical \emph{beam steering vector} defined in~\eqref{eq: ula} in the far-field region.

\par Adopting the point scatter assumption, the near-field beam focusing vector for an $N$-element ULA shown in Fig.~\ref{img: channel ULA} is written as
\begin{equation}
\label{eq: near field response}
\begin{aligned}
{\bf{b}}(r, \phi) = \frac{1}{\sqrt{N}}\left[e^{-jk(r^{(-\widetilde{N})}-r)},\cdots,e^{-jk(r^{(\widetilde{N})}-r)}\right]^T,
\end{aligned}
\end{equation}
where $r^{(n)}$ denotes the distance between the scatterer (or UE) and the $n^{\rm th}$ antenna element, and $r$ denotes the distance between the scatterer (or UE) and the center of the array. The maximum index is defined as $\widetilde{N} = \frac{N-1}{2}$, and $N$ is assumed to be odd. The distance term $r_l^{(n)}$ can be written as
\begin{equation}
\label{eq: near field distance term}
\begin{aligned}
r_{l}^{(n)} &= \sqrt{r_l^2+n^2d^2-2ndr_l\sin\phi_l} \\
& \mathop {\approx}\limits^{(a)} r_l - nd\sin\phi_l + \frac{n^2d^2}{2r_l}\cos^2\phi_l,
\end{aligned}
\end{equation}
where approximation (a) is derived by the second-order Taylor series expansion $\sqrt{1+x} = 1 + \frac{x}{2} - \frac{x^2}{8} + \mathcal{O}(x^3)$. As shown in~\cite{Janaswamy'17'm}, second-order expansion is usually accurate enough.

\begin{figure}[!t]
	\centering
	\setlength{\abovecaptionskip}{0.cm}
	\includegraphics[width=3in]{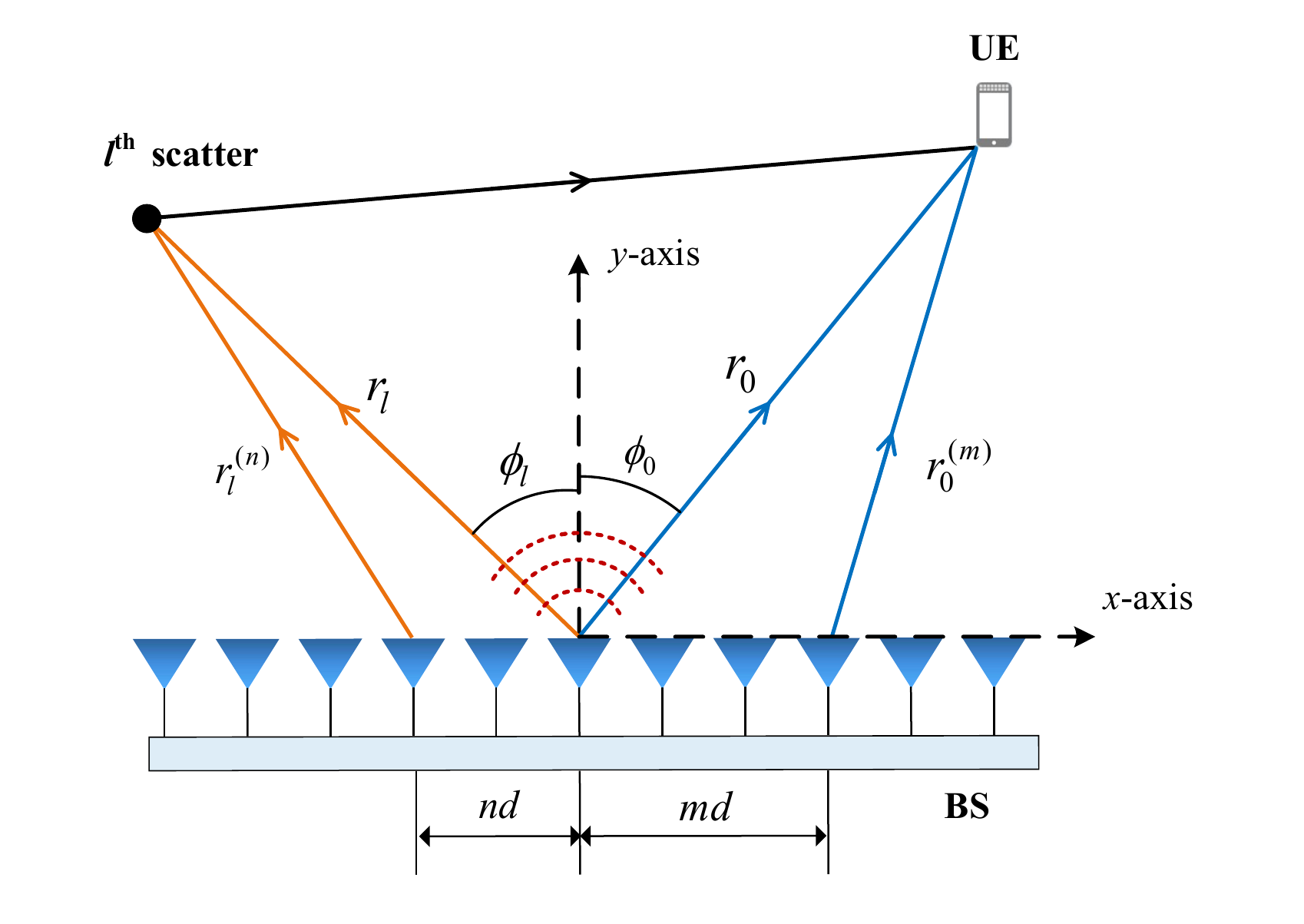}
	\caption{Near-field channel model for ULA communication systems.}
	\label{img: channel ULA}
    \vspace{-0.2cm}
\end{figure}

\begin{remark}
When Taylor series expansion only keeps the first-order term as $\sqrt{1+x} \approx 1 + \frac{x}{2}$, beam focusing vectors in~\eqref{eq: near field response} degenerate to beam steering vectors in~\eqref{eq: far-field channel}. In other words, far-field beam steering vector is a special case of near-field beam focusing vector without higher-order information.
\end{remark}

To sum up, different from far-field channels based on planar-wave assumptions in~\eqref{eq: ula}, the phase term based on~\eqref{eq: near field distance term} is related to both spatial angle and distance. Therefore, the near-field channels for UEs located at the same angle but at different distances are remarkably different. The UEs can be distinguished according to their spatial angle and distance, i.e., their location in the two-dimensional (2D) space, which is a fundamental change compared with far-field communications. In the following section, this extra resolution of near-field beams in the distance domain will be discussed.

\section{Analysis of Asymptotic Orthogonality of Near-Field Beam Focusing Vectors}\label{sec: analysis}
The correlation of classical far-field steering vectors focusing on $\phi_l$ and $\phi_m$ can be formulated as
\begin{equation}
\label{eq: far gain}
\begin{aligned}
\left|{\bf{a}}^H(\phi_l) {\bf{a}}(\phi_m)\right| & = \frac{1}{N} \left|\sum_{n=-\widetilde{N}}^{\widetilde{N}} e^{jnkd(\sin\phi_m-\sin\phi_l)}\right| \\
& = \frac{1}{N}\left| \Xi_N(kd(\sin\phi_m-\sin\phi_l)) \right|,
\end{aligned}
\end{equation}
where $\Xi_N(\alpha) = \sin\frac{N\pi}{2}\alpha/(N\sin\frac{\pi}{2}\alpha)$ is the Dirchlet sinc function. According to~\eqref{eq: far gain}, the correlation of steering vectors achieves the maximum when $\phi_l = \phi_m$. If we consider two single-path UEs located at the same angle, the correlation of their channels achieves the maximum and simultaneous transmissions can not be established through precoding. Otherwise, the correlation of steering vectors focusing on different angles tends to be orthogonal with infinite antennas as~\cite{Xiao'15'j}
\begin{equation}
\label{eq: far infinity}
\begin{aligned}
\lim_{N \to +\infty} |{\bf{a}}^H(\phi_l) {\bf{a}}(\phi_m)| = 0 \ , \  \phi_l \neq \phi_m.
\end{aligned}
\end{equation}

Therefore, the spatial angular resolution of BS tends to infinity as the number of antennas increases. Owing to the angular orthogonality, BS could distinguish different channel components and multiplex different data streams to different UEs. Therefore, the angular orthogonality of the far-field beam steering vectors contributes to the SDMA scheme \cite{Xiao'15'j}. 

\par Similarly, we wish to analyze the correlation of near-field beam focusing vectors defined in~\eqref{eq: near field response}. The correlation of two beam focusing vectors corresponding to the location of $(r_l, \phi_l)$ and $(r_m, \phi_m)$ is written as $\left|{\bf{b}}^H(r_l, \phi_l) {\bf{b}}(r_m, \phi_m)\right|$.
According to {\bf{Lemma} 1} in~\cite{Cui'22'tcom}, the correlation of near-field beam focusing vectors corresponding to the same angle but different distances can be illustrated with the following lemma. 
\begin{lemma}
\label{lemma1}
The correlation of near-field beam focusing vectors corresponding to different distances can be approximated as
\begin{equation}
\label{eq: near field correlation}
\begin{aligned}
\left|{\bf{b}}^H(r_l, \phi) {\bf{b}}(r_m, \phi) \right| \approx \left|G(\beta)\right| =  \left| \frac{C(\beta)+jS(\beta)}{\beta} \right|, 
\end{aligned}
\end{equation}
where $\beta = N\sqrt{\frac{d^2\cos^2\phi}{2\lambda}|\frac{1}{r_l}-\frac{1}{r_m}|}$, $C(\cdot)$ and $S(\cdot)$ denote the Fresnel functions written as $C(x) = \int_0^x \cos(\frac{\pi}{2}t^2){\rm{d}}t$ and $S(x) = \int_0^x \sin(\frac{\pi}{2}t^2){\rm{d}}t$~\cite{Sherman'62'j}.
\end{lemma}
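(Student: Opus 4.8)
The plan is to start from the definition of the beam focusing vector in~\eqref{eq: near field response} and collapse the inner product into a one-dimensional quadratic phase sum that can be recognized as a Fresnel integral. First I would write
\begin{equation}
\mathbf{b}^H(r_l,\phi)\,\mathbf{b}(r_m,\phi) = \frac{1}{N}\sum_{n=-\widetilde{N}}^{\widetilde{N}} e^{jk\left[(r_l^{(n)}-r_l)-(r_m^{(n)}-r_m)\right]},
\end{equation}
and substitute the second-order approximation~\eqref{eq: near field distance term} for each distance term. Because the two vectors share the same angle $\phi$, the first-order (linear-in-$n$) contribution $-nd\sin\phi$ is identical in both phases and cancels exactly, so only the quadratic curvature term survives. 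The exponent therefore reduces to $j c\,n^2$ with $c = \frac{\pi d^2\cos^2\phi}{\lambda}\left(\frac{1}{r_l}-\frac{1}{r_m}\right)$. This cancellation is the conceptual heart of the distance-domain resolution: the angular phase disappears and only the distance-dependent quadratic phase is left.

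Next I would pass from the discrete sum to a continuous integral, replacing $\frac{1}{N}\sum_{n=-\widetilde{N}}^{\widetilde{N}}e^{jcn^2}$ by $\frac{1}{N}\int_{-N/2}^{N/2} e^{jcx^2}\,dx$. The change of variables $t = x\sqrt{2|c|/\pi}$ sends the exponent to $\frac{\pi}{2}t^2$ and the limits to $\pm\beta$ with $\beta = \frac{N}{2}\sqrt{2|c|/\pi}$, which one checks coincides with the stated $\beta = N\sqrt{\frac{d^2\cos^2\phi}{2\lambda}\left|\frac{1}{r_l}-\frac{1}{r_m}\right|}$. Splitting $e^{j\frac{\pi}{2}t^2}$ into its real and imaginary parts and using the evenness of the integrand then gives $\frac{1}{2\beta}\cdot 2\left[C(\beta)+jS(\beta)\right] = \frac{C(\beta)+jS(\beta)}{\beta} = G(\beta)$; since only the modulus is reported, the sign of $\frac{1}{r_l}-\frac{1}{r_m}$ is immaterial, and taking $|\cdot|$ yields~\eqref{eq: near field correlation}.

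The step I expect to be the main obstacle — and the only place where the approximation sign genuinely enters — is justifying the replacement of the sum by the integral. The Riemann-sum error is governed by the phase increment between adjacent terms, which near the array edge is of order $cN$; since $c$ scales like $\beta^2/N^2$ for fixed $\beta$, this increment vanishes as $N\to\infty$, so the approximation is asymptotically tight. I would make this quantitative through an Euler--Maclaurin bound, mirroring the argument of \textbf{Lemma}~1 in~\cite{Cui'22'tcom} rather than re-deriving it from scratch. A secondary technicality is ensuring that the second-order Taylor expansion~\eqref{eq: near field distance term} remains accurate across the whole aperture, which is precisely the near-field regime invoked in the text via~\cite{Janaswamy'17'm}.
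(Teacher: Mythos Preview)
Your derivation is correct and is precisely the standard Fresnel-integral argument: cancel the linear phase at equal angles, keep the quadratic curvature term, replace the Riemann sum by an integral, and change variables to land on $C(\beta)+jS(\beta)$. Note that the paper does not actually prove this lemma at all; it simply imports the result from \textbf{Lemma~1} of~\cite{Cui'22'tcom}, and what you have written is exactly the argument behind that cited result, so there is nothing further to compare.
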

This lemma reveals that the correlation of near-field beam focusing vectors varies with the distance, which fundamentally differs from far-field regions where the correlation of two far-field beam steering vectors is invariable over different distances. We show that the correlation tends to zero as the number of antennas scales up as follows. 

\begin{corollary}[Asymptotic Orthogonality in Distance Domain]
\label{coro1}
Near-field beam focusing vectors corresponding to the same angle and different distances are asymptotically orthogonal with the increasing number of antennas, which is to say
\begin{equation}
\label{eq: near lim}
\begin{aligned}
\lim_{N \to +\infty} \left|{\bf{b}}^H(r_l, \phi) {\bf{b}}(r_m, \phi) \right|  = 0,\ {\rm{for}}\ r_l \neq r_m.
\end{aligned}
\end{equation}
\end{corollary}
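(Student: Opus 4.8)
The plan is to reduce the claim to an elementary limit by invoking Lemma~\ref{lemma1}, which already expresses the correlation as $|G(\beta)| = |C(\beta)+jS(\beta)|/\beta$. The only structural fact I need about $\beta$ is that it grows linearly in $N$: writing
\begin{equation}
\beta = cN, \quad c = \sqrt{\frac{d^2\cos^2\phi}{2\lambda}\left|\frac{1}{r_l}-\frac{1}{r_m}\right|},
\end{equation}
the hypothesis $r_l \neq r_m$ (together with a non-endfire angle, so that $\cos\phi \neq 0$) guarantees $c > 0$, and hence $\beta \to +\infty$ as $N \to +\infty$. This is the pivotal step: the distinctness of the two distances is precisely what keeps the argument of the Fresnel functions growing without bound, whereas at equal distances $\beta \equiv 0$ and the correlation stays at its maximum.

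Second, I would bound the numerator uniformly. The Fresnel integrals $C(x) = \int_0^x \cos(\tfrac{\pi}{2}t^2)\,{\rm d}t$ and $S(x) = \int_0^x \sin(\tfrac{\pi}{2}t^2)\,{\rm d}t$ are bounded on $[0,\infty)$ and converge to the Fresnel limits $C(\infty)=S(\infty)=\tfrac12$. Consequently there is a finite constant $M$ with $|C(\beta)+jS(\beta)| \le M$ for every $\beta \ge 0$; indeed $|C(\beta)+jS(\beta)| \to |\tfrac12 + j\tfrac12| = \tfrac{1}{\sqrt 2}$ as $\beta \to \infty$. Combining this with the linear growth of $\beta$ yields the squeeze
\begin{equation}
0 \le |G(\beta)| = \frac{|C(\beta)+jS(\beta)|}{\beta} \le \frac{M}{cN} \xrightarrow{N\to\infty} 0,
\end{equation}
which, through the approximation in Lemma~\ref{lemma1}, establishes the corollary.

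I do not anticipate a genuine obstacle in the limit itself, since the analytic content is entirely front-loaded into Lemma~\ref{lemma1}; the remaining argument is just ``bounded numerator over diverging denominator.'' The one point that deserves care is that Lemma~\ref{lemma1} is stated as an \emph{approximation} (a Riemann-sum replacement of the discrete antenna summation by the Fresnel integral), so a fully rigorous treatment would additionally confirm that the approximation error itself vanishes as $N\to\infty$ rather than merely being absorbed into the $1/\beta$ decay. For the purpose of this corollary I treat the lemma's expression as the governing asymptotic, and I would explicitly flag the two degenerate configurations excluded by hypothesis, namely $r_l = r_m$ (giving $\beta = 0$ and unit correlation) and the endfire case $\cos\phi = 0$ (where the second-order distance term vanishes and no distance-domain separation is available).
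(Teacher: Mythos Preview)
Your argument is correct and follows essentially the same route as the paper's own proof: both invoke Lemma~\ref{lemma1}, observe that $\beta \to +\infty$ because $r_l \neq r_m$, and conclude from the boundedness (indeed convergence to $\tfrac12 + j\tfrac12$) of the Fresnel numerator that $|G(\beta)| \to 0$. Your version is somewhat more explicit---writing $\beta = cN$, using a squeeze with a uniform bound $M$, and flagging the endfire degeneracy $\cos\phi = 0$ and the approximation caveat---but the underlying idea is identical.
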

\begin{proof}
As shown in~\eqref{eq: near field correlation}, when the number of antennas $N$ tends to infinity, the numerator $C(\beta) + jS(\beta)$ converges to $0.5 + 0.5j$ and the denominator $\beta$ tends to $+\infty$~\cite{Sherman'62'j}. Therefore, the correlation converges to 0, which proves the asymptotic orthogonality of beam focusing vectors.
\end{proof}
To verify the orthogonality in the distance domain, the correlation of beam focusing vectors corresponding to $(5\,{\rm{m}}, \pi/6)$ and $(15\,{\rm{m}}, \pi/6)$ is plotted in Fig.~\ref{img:Cor_ula}. It is shown that the correlation significantly decreases as the number of antennas increases. Moreover,~\eqref{eq: near field correlation} can well approximate the accurate correlation. Furthermore, we prove a more general 2D asymptotic orthogonality in both angular and distance domains with the following corollary.

\begin{figure}[!t]
	\centering
	\setlength{\abovecaptionskip}{0.cm}
	\includegraphics[width=3in]{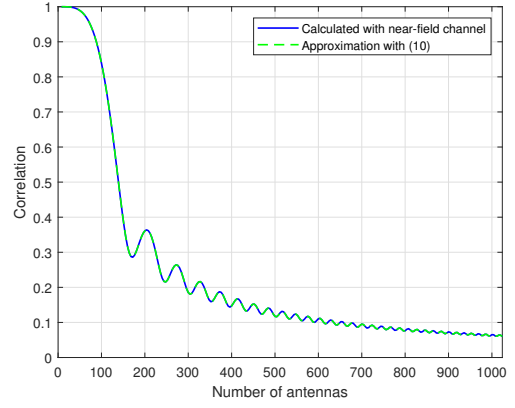}
	\caption{Correlation of beam focusing vectors with increasing antennas. The frequency is set to $30$ GHz and the antennas are half-wavelength spaced.}
	\label{img:Cor_ula}
    \vspace{-0.2cm}
\end{figure}

\begin{corollary}[Asymptotic Orthogonality in 2D Domain]
\label{coro2}
Near-field beam focusing vectors corresponding to any different angles or different distances are also asymptotically orthogonal with the increasing number of antennas, which is to say
\begin{equation}
\label{eq: near lim 2D}
\begin{aligned}
\lim_{N \to +\infty} \left|{\bf{b}}^H(r_l, \phi_l) {\bf{b}}(r_m, \phi_m)\right| = 0,\,{\rm{for}}\,r_l \neq r_m\ {\rm{or}}\ \phi_l \neq \phi_m.
\end{aligned}
\end{equation}
\end{corollary}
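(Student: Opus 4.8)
The plan is to collapse the two-dimensional inner product into a single scalar exponential sum and then show that this sum stays bounded while the normalising factor decays like $1/N$. First I would insert the second-order expansion~\eqref{eq: near field distance term} into the definition~\eqref{eq: near field response}, so that the correlation becomes
\[
\left|{\bf{b}}^H(r_l,\phi_l){\bf{b}}(r_m,\phi_m)\right|
= \frac{1}{N}\left|\sum_{n=-\widetilde{N}}^{\widetilde{N}} e^{j(an+bn^2)}\right|
=: \frac{1}{N}\,|S_N|,
\]
where $a = kd(\sin\phi_m-\sin\phi_l)$ collects the \emph{linear} (angular) term and $b = \frac{kd^2}{2}\big(\frac{\cos^2\phi_l}{r_l}-\frac{\cos^2\phi_m}{r_m}\big)$ collects the \emph{quadratic} (distance) term. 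The key structural remark is that $a$ and $b$ are constants that do \emph{not} depend on $N$; hence the whole claim reduces to proving that $|S_N|=\mathcal{O}(1)$.

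I would then split into three cases. When $\phi_l=\phi_m$ but $r_l\neq r_m$ we have $a=0$ and (off endfire) $b\neq 0$, which is exactly the configuration already handled by \textbf{Lemma}~\ref{lemma1} and \textbf{Corollary}~\ref{coro1}: the correlation equals $|G(\beta)|$ with $\beta\propto N$ and thus already vanishes. The genuinely new situation is $\phi_l\neq\phi_m$, where $a\neq 0$. If in addition $b=0$ (i.e.\ $\frac{\cos^2\phi_l}{r_l}=\frac{\cos^2\phi_m}{r_m}$), the quadratic term disappears and $S_N$ is a geometric series equal to the far-field Dirichlet kernel, so $|S_N|=\big|\sin(Na/2)/\sin(a/2)\big|\le 1/|\sin(a/2)|$, a constant. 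If $b\neq 0$, I would complete the square,
\[
an+bn^2 = b\Big(n+\tfrac{a}{2b}\Big)^2-\tfrac{a^2}{4b},
\]
discard the unimodular factor $e^{-ja^2/(4b)}$, and approximate the shifted sum by its integral (the same sum-to-integral step underlying \textbf{Lemma}~\ref{lemma1}). The substitution $t=\sqrt{2|b|/\pi}\,(x+\frac{a}{2b})$ turns it into a Fresnel integral with limits $t_{1,2}=\sqrt{2|b|/\pi}\,(\mp\widetilde{N}+\frac{a}{2b})$; since the shift $\frac{a}{2b}$ is fixed, $t_1\to-\infty$ and $t_2\to+\infty$, so $C(t_2)-C(t_1)\to 1$ and $S(t_2)-S(t_1)\to 1$, and $S_N$ converges to the finite constant $\sqrt{\pi/(2|b|)}\,(1\pm j)$. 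In every case $|S_N|=\mathcal{O}(1)$, hence $\frac{1}{N}|S_N|\to 0$.

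The main obstacle is precisely the linear term $an$, which is what separates this corollary from \textbf{Corollary}~\ref{coro1}. The clean way to neutralise it is the completion of the square: it reveals that the angular offset enters only as (i) a global unimodular phase and (ii) a \emph{fixed} translation $\frac{a}{2b}$ of the integration window---neither growing with $N$---so the bounded-Fresnel-integral mechanism of \textbf{Lemma}~\ref{lemma1} carries over intact, the only change being asymmetric limits that still diverge to $\pm\infty$. Two minor points remain to be tidied: the sum-to-integral approximation is inherited verbatim from \textbf{Lemma}~\ref{lemma1}, and the degenerate configurations (endfire $\cos\phi=0$, or $\sin\phi_l=\sin\phi_m$ with $\phi_l\neq\phi_m$) are ruled out by the standing assumption $\phi\in(-\tfrac{\pi}{2},\tfrac{\pi}{2})$, under which $a\neq 0$ whenever $\phi_l\neq\phi_m$.
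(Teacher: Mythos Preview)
Your argument is correct: reducing the inner product to $\frac{1}{N}\bigl|\sum_n e^{j(an+bn^2)}\bigr|$, then handling the genuinely new case $a\neq 0$ by completing the square so that the angular offset becomes a fixed shift of the Fresnel-integral window, is exactly the natural extension of the machinery behind \textbf{Lemma}~\ref{lemma1} and \textbf{Corollary}~\ref{coro1}. The paper does not give an in-line proof but defers to Appendix~B of~\cite{Zidong'22'jsac}; since that is the companion journal version and the present paper already builds everything on the same Fresnel-integral approximation, your route is essentially the intended one.

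Two small remarks. First, your claim $|S_N|=\mathcal{O}(1)$ leans on the sum-to-integral step inherited from \textbf{Lemma}~\ref{lemma1}; this is a heuristic in the paper rather than a rigorous bound, but even a coarse van-der-Corput estimate $|S_N|=\mathcal{O}(|b|^{-1/2})$ (or $\mathcal{O}(N^{1/2})$ in the worst case) would still give $\frac{1}{N}|S_N|\to 0$, so the conclusion is robust. Second, your case split is clean and complete; the only configuration you leave implicit is $\phi_l\neq\phi_m$ with $r_l=r_m$, but that falls under your $a\neq 0$ branch regardless of whether $b$ vanishes, so nothing is missing.
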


\begin{proof}
The proof can be seen in~\cite{Zidong'22'jsac} Appendix B.
\end{proof}
\par The corollary reveals that the angular orthogonality in the far-field region generalizes into the 2D orthogonality in the near-field region, which indicates a stronger potential to establish simultaneous transmissions through precoding in ELAA communications. Therefore, it brings possibilities for a novel multiple access scheme, which is discussed as follows.

\section{Near-Field LDMA Scheme}\label{sec: LDMA}
\par As discussed above, near-field beamforming is capable of focusing the energy on a specific location rather than a specific angle. It indicates that, far-field steering beams can be replaced by near-field focusing beams to distinguish users at different locations and suppress interferences from other UEs in the same direction. Following this intuition, the concept of LDMA is proposed, employing near-field location-dependent beam focusing vectors as analog precoders to serve different UEs located in different locations. Compared with SDMA, the proposed LDMA provides a method to harvest extra orthogonal resources in the distance dimension.

\par Specifically, the proposed LDMA scheme comprises three main stages, including: Initial access, uplink equivalent channel estimation, and uplink/downlink data transmission. In the initial access procedure, BS could perform beam sweeping to establish a physical link connection to idle UEs. Each UE is scheduled with a specific codeword as the analog precoder ${\bf{w}}_k$ from a predefined near-field codebook $\mathcal{W} = [w_1, \cdots, w_M]$. It is worth noting that, the codebook can be designed with constraints on the maximum correlation between different codewords~\cite{Cui'22'tcom}. For fixed correlation, the distance between focal points can be determined by~\eqref{eq: near field correlation} correspondingly, revealing the distance resolution of LDMA. Therefore, by adjusting the correlation of codewords, the distance resolution can be determined. Then, the analog precoding for all connected UEs can be designed as ${\bf{F}}_{\rm{A}} = [w_1, w_2, \cdots, w_K]$. Afterward, BS could estimate the effective channel with uplink pilots as
\begin{equation}
\label{eq: effective channel}
\begin{aligned}
{\overline{\bf{h}}}_{k} = {\bf{F}}_{\rm{A}}^H {\bf{h}}_k + {\bf{F}}_{\rm{A}}^H n_k,
\end{aligned}
\end{equation}
where $n_k$ denotes the noise corresponding to the $k^{th}$ UE.
Finally, the BS could design the digital precoder through the estimated effective channel by weighted minimum mean square error (WMMSE) as in \cite{He'11'j} or zero-forcing (ZF) as
\begin{equation}
\label{eq: zero forcing}
\begin{aligned}
{\bf{F}}_{\rm{D}} = {\overline{\bf{H}}}^H \left({\overline{\bf{H}}}\ {\overline{\bf{H}}}^H\right)^{-1} {\bf{\Lambda}},
\end{aligned}
\end{equation}
where ${\overline{\bf{H}}} = [{\overline{\bf{h}}}_1, \cdots, {\overline{\bf{h}}}_K]^H$ denotes the effective channel of all $K$ UEs. The diagonal matrix $\Lambda$ denotes the power allocation for different UEs, which is designed to satisfy $\|{\bf{F}}_{\rm{A}} {\bf{f}}_{{\rm{D}},k}\|^2 = 1$. Apart from WMMSE and ZF, other design methods are also supported to design digital precoders. To sum up, the LDMA scheme can be summarized in {\bf{Algorithm}~\ref{alg:2}}.
\addtolength{\topmargin}{0.01in}
\begin{algorithm}[t]
	\caption{Location Division Multiple Access.} 
	\label{alg:2} 
	\begin{algorithmic}[1] 
		\REQUIRE ~ 
		Multi-user channel ${\bf{H}}$, near-field codebook $\mathcal{W}$.
		\ENSURE ~ 
        Digital precoder ${\bf{F}}_{\rm{D}}$ and analog precoder ${\bf{F}}_{\rm{A}}$ 
        \STATE {\bf{Initial Access:}} 
        \STATE BS performs beam sweeping with $\mathcal{W}$ and each UE reports the beam decision to BS;
		\STATE BS selects the best codeword ${\bf{w}}_k$ for $k^{\rm th}$ UE from $\mathcal{W}$ and construct the analog precoder ${\bf{F}}_{\rm{A}} = [{\bf{w}}_1, {\bf{w}}_2, \cdots, {\bf{w}}_K]$;
        \STATE {\bf{Uplink Equivalent Channel Estimation:}} 
        \STATE Each UE sends uplink non-orthogonal pilots;
        \STATE BS estimates the effective channel ${\overline{\bf{h}}}_{k}$ by~\eqref{eq: effective channel};
        \STATE BS designs digital combiner and precoder ${\bf{F}}_{\rm{D}}$ by~\eqref{eq: zero forcing};
        \STATE {\bf{Uplink/Downlink Data Transmission:}}
		\STATE BS performs combining/precoding with ${\bf{F}}_{\rm{D}}$ and ${\bf{F}}_{\rm{A}}$;
		\RETURN Digital precoder ${\bf{F}}_{\rm{D}}$ and analog precoder ${\bf{F}}_{\rm{A}}$. 
	\end{algorithmic}
\end{algorithm}

\section{Performance Analysis}\label{sec: per ana}
\subsection{Asymptotic Spectrum Efficiency for Single-path Channels}\label{sec: asy single}
To investigate the system performance of LDMA, we first assume a single-path channel scenario. The single-path channel could be written as ${\bf{h}}_k = \sqrt{N} \alpha_k {\bf{b}}(r_k, \phi_k)$. If BS acquires the perfect channel state information, the optimal analog precoder for all $K$ UEs can be written as ${\bf{F}}_{\rm{A}} = {\bf{B}} = [{\bf{b}}(r_1, \phi_1), \cdots, {\bf{b}}(r_K, \phi_K)]$. For analysis simplicity, the ZF-based digital precoder is adopted. In addition, the large-scale fading is neglected by employing reasonable power control and thus different UEs share the same channel gain. The spectrum efficiency can be obtained by the following lemma.

\begin{lemma}
\label{lemma5}
With equal power allocation for different UEs, the spectrum efficiency achieved by {\bf{Algorithm}~\ref{alg:2}} is given by 
\begin{equation}
\label{eq: EE single}
\begin{aligned}
R = \sum_{k=1}^K \log_2 \left(1+\frac{P}{K\sigma_n^2} \frac{N |\alpha_k|^2}{[{\bf{B}}^H{\bf{B}}]_{k,k}^{-1}} \right),
\end{aligned}
\end{equation}
where $P$ denotes the total transmission power, $[{{\bf{B}}^{H}}{\bf{B}}]_{k,k}^{-1}$ denotes the $k^{\rm th}$ diagonal entry of the matrix $({{\bf{B}}^{H}}{\bf{B}})^{-1}$. 
\end{lemma}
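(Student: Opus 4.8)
The plan is to exploit the fact that zero-forcing (ZF) digital precoding annihilates the inter-user interference terms in~\eqref{eq: spectrum efficiency}, so that each per-user rate collapses to a pure SNR term, and then to pin down the single remaining unknown---the effective signal gain---using the per-user power normalization $\|\mathbf{F}_{\rm{A}}\mathbf{f}_{{\rm{D}},k}\|^2 = 1$. First I would substitute the single-path model $\mathbf{h}_k = \sqrt{N}\alpha_k\mathbf{b}(r_k,\phi_k)$ and $\mathbf{F}_{\rm{A}} = \mathbf{B}$ into the (noiseless) effective channel~\eqref{eq: effective channel}. Writing $\mathbf{R} \triangleq \mathbf{B}^H\mathbf{B}$ and $\mathbf{D}_\alpha \triangleq \mathrm{diag}(\alpha_1,\ldots,\alpha_K)$, this gives the compact form $\overline{\mathbf{H}} = \sqrt{N}\,\mathbf{D}_\alpha^H\mathbf{R}$, since the $(k,l)$ entry of $\mathbf{H}\mathbf{B}$ is $\sqrt{N}\alpha_k^*[\mathbf{R}]_{k,l}$.

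Next I would verify that ZF decouples the users. Substituting~\eqref{eq: zero forcing} yields $\overline{\mathbf{H}}\mathbf{F}_{\rm{D}} = \overline{\mathbf{H}}\,\overline{\mathbf{H}}^H(\overline{\mathbf{H}}\,\overline{\mathbf{H}}^H)^{-1}\mathbf{\Lambda} = \mathbf{\Lambda}$, and because $\overline{\mathbf{H}}\mathbf{F}_{\rm{D}} = \mathbf{H}\mathbf{F}_{\rm{A}}\mathbf{F}_{\rm{D}}$ has $(k,l)$ entry $\mathbf{h}_k^H\mathbf{F}_{\rm{A}}\mathbf{f}_{{\rm{D}},l}$, the diagonal structure of $\mathbf{\Lambda}$ forces $\mathbf{h}_k^H\mathbf{F}_{\rm{A}}\mathbf{f}_{{\rm{D}},l} = 0$ for $l \neq k$ and $\mathbf{h}_k^H\mathbf{F}_{\rm{A}}\mathbf{f}_{{\rm{D}},k} = \lambda_k$, where $\lambda_k \triangleq [\mathbf{\Lambda}]_{k,k}$. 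Hence the interference sum in~\eqref{eq: spectrum efficiency} vanishes and, with equal power $p_k = P/K$, the $k$-th rate becomes $\log_2(1 + \frac{P}{K\sigma_n^2}|\lambda_k|^2)$, reducing the whole problem to evaluating $|\lambda_k|^2$.

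The main work---and the step I expect to be most error-prone---is computing $|\lambda_k|^2$ from the power constraint. Writing $\mathbf{F}_{\rm{D}} = \mathbf{G}\mathbf{\Lambda}$ with $\mathbf{G} \triangleq \overline{\mathbf{H}}^H(\overline{\mathbf{H}}\,\overline{\mathbf{H}}^H)^{-1}$ and letting $\mathbf{g}_k$ denote its $k$-th column, the normalization $\|\mathbf{F}_{\rm{A}}\mathbf{f}_{{\rm{D}},k}\|^2 = |\lambda_k|^2\,\mathbf{g}_k^H\mathbf{R}\mathbf{g}_k = 1$ isolates $|\lambda_k|^2 = 1/[\mathbf{G}^H\mathbf{R}\mathbf{G}]_{k,k}$. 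The key simplification is that substituting $\overline{\mathbf{H}} = \sqrt{N}\mathbf{D}_\alpha^H\mathbf{R}$ collapses $\mathbf{G}$ to $\frac{1}{\sqrt{N}}\mathbf{R}^{-1}\mathbf{D}_\alpha^{-H}$ (using $\mathbf{R}^H = \mathbf{R}$ and cancelling the redundant factors of $\mathbf{R}$), whence $\mathbf{G}^H\mathbf{R}\mathbf{G} = \frac{1}{N}\mathbf{D}_\alpha^{-1}\mathbf{R}^{-1}\mathbf{D}_\alpha^{-H}$ and its $(k,k)$ entry equals $[\mathbf{R}^{-1}]_{k,k}/(N|\alpha_k|^2)$. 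This gives $|\lambda_k|^2 = N|\alpha_k|^2/[\mathbf{B}^H\mathbf{B}]_{k,k}^{-1}$; substituting into the per-user rate and summing over $k$ produces~\eqref{eq: EE single}. Care must be taken that $\mathbf{D}_\alpha$ is invertible (guaranteed by $\alpha_k \neq 0$) and that $\mathbf{R}$ is nonsingular, which holds whenever the $K$ beam focusing vectors are linearly independent---a condition underwritten by the asymptotic orthogonality of Corollary~\ref{coro2}.
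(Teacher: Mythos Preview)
Your derivation is correct: the decomposition $\overline{\mathbf{H}}=\sqrt{N}\,\mathbf{D}_\alpha^H\mathbf{R}$, the ZF interference cancellation, and the evaluation of $|\lambda_k|^2$ via the per-stream power normalization all check out line by line, and the final expression coincides with~\eqref{eq: EE single}. The paper itself does not give an in-text proof of this lemma---it simply defers to Appendix~C of the companion journal article~\cite{Zidong'22'jsac}---so there is no detailed argument here to compare against; your approach is the standard ZF spectral-efficiency computation and is exactly what one would expect that appendix to contain.
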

\begin{proof}
The proof can be seen in~\cite{Zidong'22'jsac} Appendix C.
\end{proof}
For multi-user MIMO systems, an ideal communication scenario is that multiple UEs could be served by BS without interferences, leading to the ideal spectrum efficiency as
\begin{equation}
\label{eq: capacity opt}
\begin{aligned}
\hat{R} = \sum_{k=1}^K \log_2 \left(1+\frac{P}{K\sigma^2} N|\alpha_k|^2 \right).
\end{aligned}
\end{equation}
Due to the asymptotic orthogonality proved in {\bf Corollary~\ref{coro2}}, the spectrum efficiency could approach the ideal one according to the following corollary.

\begin{corollary}
\label{coro3}
If the number of BS antennas $N$ tends to infinity, the spectrum efficiency for a fixed number of UEs approaches the ideal spectrum efficiency with probability one, i.e., 
\begin{equation}
\label{eq: asymptotic spectrum efficiency}
\begin{aligned}
\mathbb{P}\left[ \lim_{N \to +\infty} R = \hat{R}\right] = 1.
\end{aligned}
\end{equation}
\end{corollary}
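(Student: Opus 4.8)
The plan is to reduce the corollary to a single scalar convergence. Comparing the spectrum efficiency $R$ in~\eqref{eq: EE single} with the ideal value $\hat{R}$ in~\eqref{eq: capacity opt} term by term, the two expressions are identical except that the $k$-th summand of $R$ carries the factor $[{\bf{B}}^H{\bf{B}}]_{k,k}^{-1}$ where $\hat{R}$ carries $1$. Hence it suffices to prove that $[{\bf{B}}^H{\bf{B}}]_{k,k}^{-1}\to 1$ for each $k$ as $N\to+\infty$, after which the continuity of $\log_2(\cdot)$ immediately yields $R\to\hat{R}$.

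First I would examine the Gram matrix ${\bf{B}}^H{\bf{B}}$, whose $(i,j)$ entry is the correlation ${\bf{b}}^H(r_i,\phi_i){\bf{b}}(r_j,\phi_j)$. Its diagonal entries equal $\|{\bf{b}}(r_k,\phi_k)\|^2 = 1$ exactly, because of the $\frac{1}{\sqrt{N}}$ normalization in~\eqref{eq: near field response}. For the off-diagonal entries corresponding to distinct user locations, {\bf Corollary~\ref{coro2}} guarantees $|{\bf{b}}^H(r_i,\phi_i){\bf{b}}(r_j,\phi_j)|\to 0$. Since $K$ is fixed, ${\bf{B}}^H{\bf{B}}$ has a fixed, finite number of entries, so entrywise convergence gives ${\bf{B}}^H{\bf{B}}\to{\bf{I}}_K$.

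Next I would invoke the continuity of matrix inversion on the open set of invertible matrices. Because the limit ${\bf{I}}_K$ is invertible, $({\bf{B}}^H{\bf{B}})^{-1}\to{\bf{I}}_K$, and in particular its $k$-th diagonal entry satisfies $[{\bf{B}}^H{\bf{B}}]_{k,k}^{-1}\to 1$. Substituting back into~\eqref{eq: EE single} completes the deterministic part: on the event that the $K$ user locations are pairwise distinct, $\lim_{N\to+\infty} R = \hat{R}$.

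It remains to justify the ``with probability one'' qualifier. The deterministic argument above relies on {\bf Corollary~\ref{coro2}}, which requires $r_l\neq r_m$ or $\phi_l\neq\phi_m$ for every pair $l\neq m$. Since the user locations are drawn from a continuous distribution over the near-field region, the coincidence set where two users share an identical location has Lebesgue measure zero and therefore probability zero; on its complement, of probability one, all locations are distinct and the convergence holds. The Gaussian channel gains $\alpha_k$ are almost surely finite and appear identically in $R$ and $\hat{R}$, so they do not affect the limit. I expect the main obstacle to be not the convergence itself but making this measure-zero exclusion rigorous --- verifying that the joint density of the positions assigns no mass to the coincidence set --- together with confirming that ${\bf{B}}^H{\bf{B}}$ stays invertible for all sufficiently large $N$, so that the ZF precoder and the quantity $[{\bf{B}}^H{\bf{B}}]_{k,k}^{-1}$ in~\eqref{eq: EE single} are well defined along the whole sequence.
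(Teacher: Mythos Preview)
Your proposal is correct and follows precisely the route the paper sets up: use {\bf Corollary~\ref{coro2}} to drive the off-diagonal entries of ${\bf B}^H{\bf B}$ to zero so that ${\bf B}^H{\bf B}\to{\bf I}_K$, invoke continuity of inversion to get $[{\bf B}^H{\bf B}]_{k,k}^{-1}\to 1$, and then read off $R\to\hat{R}$ from~\eqref{eq: EE single} and~\eqref{eq: capacity opt}; the almost-sure qualifier is handled by the measure-zero coincidence set. The paper does not spell out its own argument here---it defers to Appendix~D of~\cite{Zidong'22'jsac}---but given that {\bf Lemma~\ref{lemma5}} and {\bf Corollary~\ref{coro2}} are exactly the ingredients provided, your reconstruction is the intended one.
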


\begin{proof}
The proof can be seen in~\cite{Zidong'22'jsac} Appendix D.
\end{proof}
Therefore, as the number of antennas tends to infinity, the asymptotic optimality could be ensured for LDMA schemes.

\subsection{Linear Distribution Analysis}\label{sec: per B}
Compared with SDMA, the ability to serve single-path UEs residing in the same angle could be viewed as a feature distinguishing LDMA from SDMA. To verify this advantage, we consider a special scenario where UEs are linearly distributed.

First, we consider a simple three UEs system, where UEs are distributed in a certain direction $\phi$ and distance range of $[r_{\rm{min}}, r_{\rm{max}}]$. Without loss of generality, we assume $r_1 \leq r_2 \leq r_3$. According to {\bf{Lemma}~\ref{lemma5}}, the analog precoder can be obtained as ${\bf{B}}_{\rm{TU}} = [{\bf{b}}(r_1, \phi),{\bf{b}}(r_2, \phi), {\bf{b}}(r_3, \phi)]$. The maximum spectrum efficiency of this system can be obtained based on the following assumptions:
\begin{enumerate}[(i)]
\item Single-path channel is adopted and high SNR is assumed.
\item Interferences of non-adjacent UEs are neglected, i.e. the interference of $1^{\rm st}$ and $3^{\rm rd}$ UE is approximated by 0.
\item ZF-based digital precoders are adopted.
\end{enumerate}
\par Then, we can denote ${\bf{B}}_{\rm{TU}}^H {\bf{B}}_{\rm{TU}}$ as
\begin{equation}
\label{eq: T three users}
{\bf{T}}_{\rm{TU}} = {\bf{B}}_{\rm{TU}}^H {\bf{B}}_{\rm{TU}} = 
\left[
\begin{array}{ccc}
1 & \delta_{21}^* & 0 \\
\delta_{21} & 1 & \delta_{32}^* \\
0 & \delta_{32} & 1
\end{array}
\right],
\end{equation}
where $\delta_{ij} = {\bf{b}}^H(r_i, \phi) {\bf{b}}(r_j, \phi)$ denotes the product of the $i^{\rm th}$ and $j^{\rm th}$ beam focusing vector. Suppose  the locations of the first and third UE are fixed, the maximum achievable spectrum efficiency could be approximated by the following lemma.
\begin{lemma}
\label{lemma6}
We assume $\delta_{21} = g(x)$ and $\delta_{32} = g(r_0 - x)$, where $g(x)$ is defined as $g(x): \mathbb{R}_+ \to \mathbb{C}$, and $|g(x)|^2$ is monotonically decreasing and convex, satisfying $0 \leq |g(x)| \leq 1$ and $r_0>0$. The spectrum efficiency with ${\bf{T}}_{\rm{TU}}$ satisfies 
\begin{equation}
\label{eq: three user max SE}
\begin{aligned}
R_{\rm TU}(x) \lesssim R(x)^{\rm aub} = 2\log_2 \left(1+\frac{P}{K\sigma_n^2} \frac{N |\alpha|^2 (1-2g(\hat{x}))}{(1-g(\hat{x}))} \right) \\
+ \log_2 \left(1+\frac{P}{K\sigma_n^2} N |\alpha|^2 (1-2g(\hat{x})) \right).
\end{aligned}
\end{equation}
The equality holds when $x = \hat{x}$ satisfying $g(\hat{x}) = g(r_0-\hat{x})$.
\end{lemma}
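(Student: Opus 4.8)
The plan is to reduce the maximization of $R_{\rm TU}(x)$ to an explicit one-dimensional problem by inverting the tridiagonal matrix $\mathbf{T}_{\rm TU}$, and then to combine the symmetry of the resulting objective with the convexity and monotonicity hypotheses on $|g|^2$ to show that the balanced configuration $x=\hat{x}$ is the maximizer. The tridiagonal form of $\mathbf{T}_{\rm TU}$ is already supplied by assumption (ii). Writing $u:=|\delta_{21}|^2=|g(x)|^2$ and $v:=|\delta_{32}|^2=|g(r_0-x)|^2$ — only the squared magnitudes enter because $\mathbf{T}_{\rm TU}$ is Hermitian — a direct cofactor computation gives $\det\mathbf{T}_{\rm TU}=1-u-v$ and
\begin{equation}
[\mathbf{T}_{\mathrm{TU}}^{-1}]_{1,1}=\frac{1-v}{1-u-v},\;\; [\mathbf{T}_{\mathrm{TU}}^{-1}]_{2,2}=\frac{1}{1-u-v},\;\; [\mathbf{T}_{\mathrm{TU}}^{-1}]_{3,3}=\frac{1-u}{1-u-v}.
\end{equation}
I would record the standing feasibility constraint $u+v<1$, which is exactly the condition $\mathbf{T}_{\rm TU}\succ 0$ needed for the ZF precoder of assumption (iii) to be well defined.

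Substituting these diagonal entries into \textbf{Lemma~\ref{lemma5}} with equal gains $|\alpha_k|^2=|\alpha|^2$ and abbreviating $c:=\tfrac{P}{K\sigma_n^2}N|\alpha|^2$, the objective becomes the explicit function
\begin{equation}
\begin{split}
R(x)={}&\log_2\!\left(1+c\,\tfrac{1-u-v}{1-v}\right)+\log_2\!\left(1+c(1-u-v)\right)\\
&+\log_2\!\left(1+c\,\tfrac{1-u-v}{1-u}\right).
\end{split}
\end{equation}
Setting $x=\hat{x}$, so that $u=v=|g(\hat{x})|^2$, collapses the first and third terms into a single doubled term and reproduces verbatim the claimed $R(x)^{\rm aub}$ (so that the symbol $g(\hat{x})$ in the statement is to be read as the squared magnitude $|g(\hat{x})|^2$). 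This simultaneously identifies the candidate optimum and verifies the equality case; note that $\hat{x}=r_0/2$ follows from strict monotonicity of $|g|^2$.

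It remains to prove that $x=\hat{x}$ is the global maximizer, and I expect this to be the crux. The objective is invariant under $x\mapsto r_0-x$ (which swaps $u\leftrightarrow v$), so $\hat{x}$ is automatically a stationary point; the difficulty is that there are two competing effects. By convexity of $\psi:=|g|^2$ the sum $u+v=\psi(x)+\psi(r_0-x)$ is \emph{minimized} at $\hat{x}$, which maximizes the middle user's SINR $\propto(1-u-v)$ and the numerators of the two edge terms; but for a fixed value of $u+v$ the product $(1-u)(1-v)$ in the edge denominators is \emph{largest} at $u=v$, which pushes against $\hat{x}$. The proof must therefore show that the first effect dominates, and this is precisely where the convexity of $|g|^2$ — rather than mere monotonicity — is indispensable.

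To carry this out I would invoke the high-SNR regime of assumption (i) to approximate $R(x)\approx 3\log_2 c+\log_2\frac{(1-u-v)^3}{(1-u)(1-v)}$, reducing the claim to the statement that $\Psi(x):=3\log(1-u-v)-\log(1-u)-\log(1-v)$ is nondecreasing on $[0,\hat{x}]$ (the behaviour on $[\hat{x},r_0]$ then follows by symmetry). Differentiating and substituting $u'=\psi'(x)$, $v'=-\psi'(r_0-x)$, the sign of $\Psi'$ is governed by convexity, which gives $\psi'(x)\le\psi'(r_0-x)\le 0$ (hence $u'+v'\le 0$), and by monotonicity, which gives $u\ge v$ on $[0,\hat{x}]$. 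Clearing the common denominator $(1-u-v)(1-u)(1-v)$, the numerator of $\Psi'$ organizes into a polynomial in $D_0:=1-u-v$ whose leading and constant coefficients are manifestly nonnegative multiples of $A-B$ with $A:=-\psi'(x)\ge B:=-\psi'(r_0-x)\ge 0$. The genuine obstacle is the cross term: a crude term-by-term bound is too lossy because the edge denominators satisfy $1-u\le 1-v$, so I would instead certify nonnegativity of the full numerator jointly, exploiting that the nonnegative $D_0^2$ and $uv$ contributions compensate the cross term — this is the one step requiring care rather than routine algebra.
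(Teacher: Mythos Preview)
The paper does not actually contain a proof of this lemma; it only cites Appendix~E of an external journal version, so there is no in-text argument to compare your proposal against directly. That said, your reduction is the natural one and almost certainly coincides with what that appendix does: the explicit cofactor inversion of $\mathbf{T}_{\rm TU}$ is correct (determinant $1-u-v$ and diagonal entries $\tfrac{1-v}{1-u-v},\,\tfrac{1}{1-u-v},\,\tfrac{1-u}{1-u-v}$), the substitution into Lemma~\ref{lemma5} yields exactly the three-term expression you wrote, and evaluating at $u=v=|g(\hat{x})|^2$ reproduces the claimed bound, confirming your reading that the ``$g(\hat{x})$'' in the statement should be $|g(\hat{x})|^2$.

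Where your proposal remains a plan rather than a proof is the global optimality step. You correctly isolate the competition between the $(1-u-v)^3$ numerator (favoring $\hat{x}$ by convexity of $|g|^2$) and the $(1-u)(1-v)$ denominator (disfavoring $\hat{x}$ for fixed $u+v$), and you set up the right derivative computation for $\Psi(x)=3\log(1-u-v)-\log(1-u)-\log(1-v)$. But the closing sentence --- that the $D_0^2$ and $uv$ pieces ``compensate the cross term'' --- is an assertion, not an argument. To turn it into one, write $\Psi'$ over the common denominator and use $A\ge B\ge 0$, $u\ge v$ to factor the numerator as $(A-B)$ times a manifestly nonnegative quantity plus a remainder that can be bounded using $1-u\le 1-v\le 1-u-v+u=1-v$; alternatively, note that the high-SNR objective $\log\frac{(1-s)^3}{1-s+p}$ (with $s=u+v$, $p=uv$) is decreasing in $s$ and increasing in $-p$, and then argue via the convexity hypothesis that moving from $\hat{x}$ increases $s$ by at least a second-order amount while the change in $p$ is only first-order in $u-v$. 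Either route closes the gap, but as written the proof stops one inequality short.
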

\begin{proof}
The proof can be seen in~\cite{Zidong'22'jsac} Appendix E.
\end{proof}
\begin{remark}
It can be proved that the envelope of $|G(\beta)|$ in~\eqref{eq: near field correlation} approximately satisfies the monotonically decreasing and convex property. Therefore, the maximum spectrum efficiency can be approximately obtained through {\bf{Lemma}~\ref{lemma6}} for a three-UE system. It is worth noting that, this conclusion is drawn under the assumption that interference from non-adjacent UEs is neglected. Therefore, this lemma can be viewed as an approximated upper bound for real communication scenarios.
\end{remark}
\par Under the same three assumptions, the conclusion of {\bf{Lemma}~\ref{lemma6}} can be generalized into multi-user scenarios. Specifically, we consider a system where multiple UEs are linearly distributed. If we investigate any adjacent three UEs, the middle UE must be located in the position according to~{{\bf Lemma~\ref{lemma6}}}. Then, the maximum spectrum efficiency of linearly distributed UEs can be obtained through the following lemma.

\begin{lemma}
\label{lemma7}
The approximated upper bound of the spectrum efficiency of linearly distributed UEs can be expressed as
\begin{equation}
\label{eq: multiple users maximum}
\begin{aligned}
R_{K} \lesssim  R^{\rm{aub}} = \sum_{k=1}^K \log_2 \left(1+\frac{P}{K\sigma_n^2} \frac{N |\alpha_k|^2}{\gamma_k} \right),
\end{aligned}
\end{equation}
where $\gamma_k$ is determined by
\begin{equation}
\label{eq: gamma formulation}
\begin{aligned}
\gamma_k = \frac{(\chi_1 x_1^{k-2}+\chi_2x_2^{k-2})(\chi_1x_1^{K-k-1}+\chi_2x_2^{K-k-1})}{\chi_1x_1^{K-1}+\chi_2x_2^{K-1}},
\end{aligned}
\end{equation}
and $x_1, x_2$ are solutions to $x^2-x+|\delta|^2=0$, and $\chi_1 = -\frac{x_1^2}{x_2-x_1}$, $\chi_2 = \frac{x_2^2}{x_2-x_1}$, $|\delta|$ is defined as the min-max correlation between UEs as $|\delta| = \mathop{\min}\limits_{r_1,\cdots,r_K} \mathop{\max}\limits_{i\neq j} |{\bf{b}}^H(r_i,\phi){\bf{b}}(r_j,\phi))|$.
\end{lemma}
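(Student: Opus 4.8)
The plan is to reduce everything to the diagonal of an inverse Gram matrix and then exploit its tridiagonal Toeplitz structure. Starting from {\bf{Lemma}~\ref{lemma5}}, the only channel-dependent quantity in the spectrum efficiency is the $k$-th diagonal entry $[{\bf{B}}^H{\bf{B}}]_{k,k}^{-1}$ of the inverse Gram matrix, so it suffices to identify $\gamma_k=[{\bf{T}}^{-1}]_{k,k}$ with ${\bf{T}}={\bf{B}}^H{\bf{B}}$. Under the three standing assumptions, the correlations between non-adjacent UEs are set to zero, so ${\bf{T}}$ reduces to a Hermitian tridiagonal matrix whose only nonzero off-diagonal entries are the consecutive correlations $\delta_{k+1,k}$. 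Applying {\bf{Lemma}~\ref{lemma6}} to each consecutive triple forces every interior UE to sit at its balancing point $g(\hat{x})=g(r_0-\hat{x})$; chaining this condition along the whole line equalizes the magnitudes of all adjacent correlations, and by the min-max definition this common magnitude is $|\delta|$. Thus ${\bf{T}}$ becomes a symmetric tridiagonal Toeplitz matrix with unit diagonal and off-diagonal modulus $|\delta|$.

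Next I would evaluate $[{\bf{T}}^{-1}]_{k,k}$ in closed form via the cofactor formula $[{\bf{T}}^{-1}]_{k,k}=M_{kk}/\det{\bf{T}}$, where the cofactor equals the minor $M_{kk}$ since $(-1)^{2k}=1$. Because ${\bf{T}}$ is tridiagonal, deleting its $k$-th row and column disconnects it into a leading $(k-1)\times(k-1)$ block and a trailing $(K-k)\times(K-k)$ block, so the minor factorizes as $M_{kk}=D_{k-1}D_{K-k}$, where $D_m$ is the determinant of the $m\times m$ principal submatrix of the same Toeplitz form and $D_0=1$ by convention. A helpful simplification is that only $|\delta|^2$ ever enters these determinants, since each off-diagonal pair contributes $\delta_{k+1,k}\,\delta_{k,k+1}=|\delta_{k+1,k}|^2=|\delta|^2$, so the phases of the correlations are irrelevant. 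This gives $\gamma_k=D_{k-1}D_{K-k}/D_K$.

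It then remains to solve the determinant recurrence. Laplace expansion along the last row yields $D_m=D_{m-1}-|\delta|^2 D_{m-2}$ with $D_0=D_1=1$, whose characteristic equation is $x^2-x+|\delta|^2=0$ with roots $x_1,x_2$ obeying $x_1+x_2=1$ and $x_1x_2=|\delta|^2$. Writing the solution as $D_m=\chi_1 x_1^{m-1}+\chi_2 x_2^{m-1}$ and matching $D_0=D_1=1$ fixes $\chi_1=-x_1^2/(x_2-x_1)$ and $\chi_2=x_2^2/(x_2-x_1)$, exactly as stated. Substituting $D_{k-1}$, $D_{K-k}$ and $D_K$ into $\gamma_k=D_{k-1}D_{K-k}/D_K$ reproduces~\eqref{eq: gamma formulation}, and inserting this $\gamma_k$ into {\bf{Lemma}~\ref{lemma5}} delivers the bound~\eqref{eq: multiple users maximum}.

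I expect the genuine difficulty to lie in the first step, not in the linear algebra: rigorously justifying that the spectrum-efficiency-maximizing linear distribution equalizes all adjacent correlations to the single value $|\delta|$. This needs the per-triple optimality condition of {\bf{Lemma}~\ref{lemma6}} to be globally consistent along the entire chain of UEs, together with the fact that the neglected non-adjacent correlations are dominated by the adjacent ones under the monotone-decreasing, convex envelope of $|G(\beta)|$, so that the resulting bound is a faithful approximated upper bound ($\lesssim$). Once the symmetric tridiagonal Toeplitz structure is in place, the cofactor factorization and the closed-form recurrence are routine.
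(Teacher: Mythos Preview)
Your proposal is correct and is precisely the route that the closed-form expression in \eqref{eq: gamma formulation} is built to encode: the paper itself defers the proof to \cite{Zidong'22'jsac}, but the very form of the answer---roots $x_1,x_2$ of $x^2-x+|\delta|^2=0$ and coefficients $\chi_1,\chi_2$ fixed by $D_0=D_1=1$---is the signature of the tridiagonal Toeplitz determinant recurrence $D_m=D_{m-1}-|\delta|^2D_{m-2}$ combined with the cofactor identity $[{\bf T}^{-1}]_{k,k}=D_{k-1}D_{K-k}/D_K$, exactly as you set it up. Your diagnosis of where the real work lies (chaining the per-triple balancing of {\bf Lemma~\ref{lemma6}} to force a common $|\delta|$, and controlling the neglected non-adjacent correlations so that ``$\lesssim$'' is honest) is also on target; the linear algebra from that point on is indeed routine.
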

\begin{proof}
The proof can be seen in~\cite{Zidong'22'jsac} Appendix F.
\end{proof}

\section{Simulation Results}\label{sec: sim}
\subsection{Linear Distribution Scenarios}\label{sec: sim linear}
We first consider a linear distribution scenario, where UEs are aligned along the spatial angle $\phi = 0$. Each UE is located within the range $[4\,{\rm{m}}, 150\,{\rm m}]$. The number of UEs varies in $[1, 14]$ and SNR is set to be $12\, {\rm{dB}}$. The frequency is $30$ GHz and the array is half-wavelength spaced $256$-element ULA with $d = \lambda/2 = 0.5\,{\rm cm}$. 
\par The spectrum efficiency with increasing number of UEs is plotted in Fig.~\ref{img: linear maximum SE}. The red line and blue line represent the situation where UEs are placed to minimize the interference from adjacent UEs without and with non-adjacent (NA) interference, respectively. The red line denotes the approximated upper bound derived in~\eqref{eq: multiple users maximum} neglecting the NA interference, which is unreachable. A reachable spectrum efficiency for the same UE position is plotted in blue. An exhaustive search is performed to search for the realistic maximum spectrum efficiency by changing the position of UEs, which is plotted in green. It shows that the approximated upper bound in~\eqref{eq: multiple users maximum} is very tight for small number of UEs and provides an approximately ideal distribution of UEs. Therefore, {\bf{Lemma}~\ref{lemma6}} provides accurate estimations of maximum spectrum efficiency.
\par In addition, the orange line represents randomly and linearly distributed UEs and the black dashed line denotes the spectrum efficiency employing far-field SDMA where only one UE could access to the BS since channels for all UEs are constructed based on the same steering vector. It shows that even randomly distributed UEs with LDMA also outperforms far-field SDMA when the number of UEs is not very large.
\begin{figure}[!t]
	\centering
	\setlength{\abovecaptionskip}{0.cm}
	\includegraphics[width=3in]{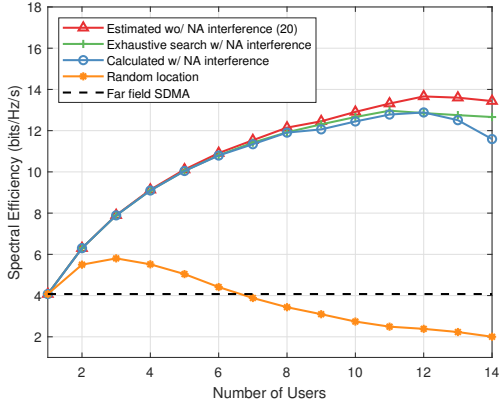}
	\caption{Verification of the approximated upper bound of the spectrum efficiency for linear distributed UEs.}
	\label{img: linear maximum SE}
    \setlength{\belowcaptionskip}{-1cm}
    \vspace{-0.3cm}
\end{figure}

\par Then, we consider a multi-path model where both LoS and NLoS paths exist. The UEs are located within the range of $[4\,{\rm{m}}, 100\,{\rm m}]$. The number of UEs is $K=4$ and the number of NLoS paths is $L=5$. The near-field polar-domain codebook in~\cite{Cui'22'tcom} is adopted and the far-field DFT codebook is adopted as the classical far-field SDMA baseline~\cite{Heath'15'j}. Both ZF and WMMSE are considered for designing digital precoders. The simulation result is shown in Fig.~\ref{img: line ULA}. The proposed LDMA outperforms SDMA on both WMMSE and ZF scenarios, with about $60\%$ and $240\%$ performance gain compared with WMMSE- and ZF-based SDMA at SNR $=\,20$ dB, respectively. In addition, the proposed WMMSE-based LDMA outperform the ZF-based fully-digital precoding in low SNR scenarios since ZF enlarges the noise. In high SNR scenarios, WMMSE-based LDMA could approach the ideal fully-digital precoding.

\begin{figure}[!t]
	\centering
	\setlength{\abovecaptionskip}{0.cm}
	\includegraphics[width=3in]{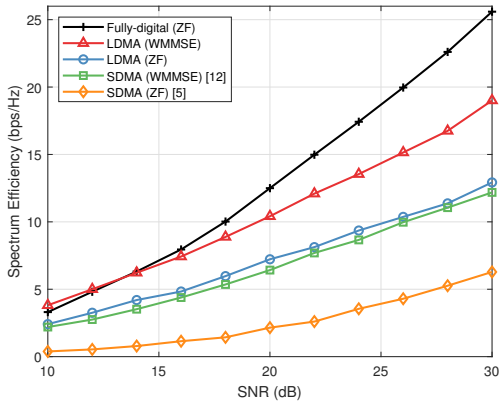}
	\caption{Comparison of the proposed LDMA and classical far-field multiple access under the linear distribution assumption.}
	\label{img: line ULA}
    \setlength{\belowcaptionskip}{-2cm}
    \vspace{-0.3cm}
\end{figure}

\vspace{-0.1cm}
\subsection{Uniform Distribution Scenarios}\label{sec: sim uniform}
\par Next, we consider a more general scenario where UEs are uniformly distributed within the cell, with a radius range $[4\,{\rm m}, 100\,{\rm m}]$ and angle range $[-\pi/3$, $\pi/3]$. The number of NLoS paths and UEs are set to $L=5$ and $K = 10$. Both ZF and WMMSE are adopted for designing digital precoders. Simulation results are shown in Fig.~\ref{img: uniform ULA}. The WMMSE-based LDMA scheme achieves nearly $100\%$ improvement at SNR $=\,20$ dB compared with WMMSE-based SDMA. Also, since the near-field precoding could be leveraged to manage inter-user interference, the performance gap between proposed LDMA with far-field SDMA increases as SNR increases. 

\begin{figure}[!t]
	\centering
	\setlength{\abovecaptionskip}{0.cm}
	\includegraphics[width=3in]{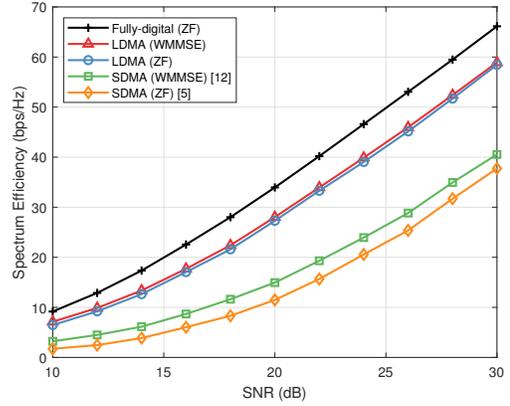}
	\caption{Comparison of the proposed LDMA and classical SDMA under the uniform distribution assumption.}
	\label{img: uniform ULA}
    \setlength{\belowcaptionskip}{-5cm}
\end{figure}

\vspace{-0.2cm}
\section{Conclusions}\label{sec: conclusion}
In this paper, the concept of LDMA is first proposed, which leverages the near-field beam focusing property to mitigate interferences and significantly enhance spectrum efficiency. Similar to the asymptotic angular orthogonality of far-field beams, the asymptotic orthogonality of near-field beams in the distance domain is proved and the LDMA scheme is investigated. Simulation results verify the superiority of the proposed LDMA scheme over uniform and linear distribution scenarios. We hope the LDMA scheme could provide a new possibility for spectrum efficiency enhancement compared classical SDMA for the widely adopted hybrid precoding.
\footnotesize
\balance 
\bibliographystyle{IEEEtran}
\bibliography{IEEEabrv,reference}
\end{document}